\def\scriptInfer[#1]#2#3{%
\infer[\scriptstyle #1]{\scriptstyle #2}{\scriptstyle #3}%
}%
\def\scriptInferD#1#2{%
\infer{\scriptstyle #1}{\scriptstyle #2}%
}%
\newlength{\hauteurAcentrer}
\newcommand{\la}{\mbox{$-\!\circ$}} 
\newcommand{\Par}{\mathrel{\wp}}
\newcommand{\with}{\,{\&}\,}
\newcommand{\stred}[1]{\mbox{$\Longrightarrow$}}
\newcommand{\stredstar}[1]{\mbox{$\Longrightarrow^*$}}
\newcommand{\plus}{\oplus}
\newcommand{\ctimes}{
\odot}
\newcommand{\cpar}{
\mid}
\newcommand{\un}{
\mathbf 1}
\newcommand{\zero}{
\mathbf 0}
\newlength{\lgctimes}
\newcommand\ConcImp{%
\mathrel{%
\relbar%
\mathchoice{\hspace{-.85\lgctimes}}{\hspace{-.85\lgctimes}}{\hspace{-.4\lgctimes}}{\hspace{-.35\lgctimes}}%
\ctimes%
}}
\newcommand{\CMALL}{CMALL}		
\newlength{\lgcirc}
\newcommand\PostImp{\settowidth{\lgcirc}{$\circ$}
\mathrel{
\relbar
\mathchoice{\hspace{-.85\lgcirc}}{\hspace{-.85\lgcirc}}{\hspace{-.4\lgcirc}}{\hspace{-.35\lgcirc}}

\circ
}}
\newlength{\Mentrylength}
\newlength{\Mentryheight}
\newcommand{\Mentrylabel}[1]%
	{\settowidth{\Mentrylength}{\textsf{#1\raisebox{\Mentryheight-1.5ex}{ : }}}%
	\settoheight{\Mentryheight}{\textsf{#1\raisebox{\Mentryheight-1.5ex}{ : }}}
	\ifthenelse{\lengthtest{\Mentrylength > 60pt}}%
		{\setlength{\labelwidth}{\Mentrylength+5pt}}%
		{\setlength{\labelwidth}{65pt}}%
	\raisebox{1.5ex-\Mentryheight}[1ex][\Mentryheight]{\makebox[\labelwidth]{%
		{\parbox[t]{\labelwidth}{\textsf{#1\raisebox{\Mentryheight-1.5ex}{ : }}}}}}%
}
{\begin{list}{}%
	{%
	\setlength{\labelwidth}{40pt}%
	\setlength{\leftmargin}{\labelwidth+\labelsep}%
	}%
}%
{\end{list}}
\begin{document}

\pagestyle{plain} 
\mainmatter       

\title{A Sequent Calculus for Modelling Interferences%
}

\author{Christophe Fouquer\'e}

\institute{LIPN -- UMR7030 \\ CNRS -- Universit\'e Paris 13
\\99 av. J-B Cl\'ement, F--93430 Villetaneuse, France\\
\email{cf@lipn.univ-paris13.fr}\\
}

\maketitle

\begin{abstract}
A logic calculus is presented that is a conservative extension of linear logic. 
The motivation beneath this work concerns lazy evaluation,
true concurrency and interferences in proof search.
The calculus includes two new connectives to deal with multisequent structures and has the cut-elimination property. Extensions are proposed that give first results concerning our objectives.
\end{abstract}

\section{Introduction}\label{sec:introduction}


Linear Logic is a good framework for interpreting and computing over linear structures.
Since Girard's seminal paper~\cite{Girard87} that gives first insights (proof nets, phase and coherent spaces), a lot has been achieved among which normalization of proofs via focusing and polarization~\cite{Andreoli92,Laurent05a}. These last results seem to be intrinsically related to principles underlying cut elimination as it allows for investigating a reconstruction of logical structures as in Ludics~\cite{Girard01}.
Recent works done on concurrent modelling using such a framework seem promising~\cite{DBLP:conf/csl/CurienF05,DBLP:conf/csl/GiamberardinoF06}.
However, non series-parallel situations are not taken into account.

We present a logic calculus (and variants) that is a conservative extension of linear logic. 
The motivation beneath this work is a careful study of lazy evaluation in logic programming. Since works of Andreoli~\cite{Andreoli92}, we know that full linear logic may be used as a logical programming language thanks to focalization and works have been done on lazy evaluation in this case~\cite{DBLP:conf/elp/CervesatoHP96}. However we show in Sect.~\ref{sec:calculus} that cut elimination is false for a naive calculus taking laziness as a principle. A second motivation concerns the control of true concurrency and interferences in proof search. For instance, suppose the following problem to be modelled in logic programming. We have two 'packs' of actions: $f = \bigoplus f_n$ (resp. $g = \bigoplus g_n$) such that $f_n$ (resp. $g_n$) transforms $n$ occurences of $a$ (resp. $b$) into $n$ occurences of $a$ and $n$ occurences of $b$, where $n\geq 1$. We suppose at the initial state only one resource
of each kind (hence one $a$ and one $b$).
We want to simulate exactly the two following situations: 
\begin{itemize}
\item[(i)] if the two actions are applied (whatever may be the order) then we have three possible results:
3 $a$ and 2 $b$, 2 $a$ and 3 $b$, 2 $a$ and 2 $b$. The first (resp. second) result is obtained when action $f$ (resp. $g$) is applied first followed by action $g$ (resp. $f$). The third result occurs when the two actions are performed independently.
\item[(ii)] if the two actions are applied strictly concurrently, there is only one possible result: we get 2 $a$ and 2 $b$.
\end{itemize}
This is not possible inside propositional classical or linear logic as it requires a control between proofs.
For that purpose, we basically shift from a sequent view to a multisequent view. Moreover sharing of formulas occurences between such sequents is allowed. The reader should have in mind the following elements:
\begin{itemize}
\item logical operations are done on occurences of formulas that may be shared among different sequents,
\item a sequent is a place grouping a bunch of occurences,
\item each step of a proof transforms zero, one or two multisequents into one by means of an operation, either structural or logical (in this last case the operation is done on occurences of formula and entails the structure of the conclusion),
\item equivalently, a multisequent may be defined as a set of places, a set of occurences of formulas and a function relating a place to a set of occurences.
\end{itemize}
In the following the two interpretations may be used. 
Shifting from sequents to multisequents gives place for a new structural operation that joins sequents (to be compared with the par operation that joins occurences in a sequent). In a first step we consider a "2-way" connective that "relates" two sequents in a multisequent. Its dual is denoted $\cpar$ and called cpar.
We then consider the following extensions:
\begin{itemize}
\item add of a 'cloning' structural rule: this comes from the observation that interaction of a sequent by means of a cut elimination is behaviouraly equivalent to interaction with two sequents sharing exactly the same occurences. However this last observation is not provable without such a cloning rule.
\item add exponential-like modalities (Sec.~\ref{sec:modalities}): standard modalities for linear logic are available. However as sharing is internal, it allows for adding modalities whose behaviour is the converse of the standard one.
\end{itemize}

\section{Related Works}\label{sec:RelatedWorks}


Modelling interferences has not yet been really investigated in logic. First of all, classical logic as well as modal logic do not take seriously into account the notion of resource, hence appear to be inadequate. Second, modelling (and controlling) interferences may seem contradictory in the framework of Linear Logic as the splitting mechanism seems at the heart of cut-elimination. However, current works done on concurrency are close.
Following Girard's works on Ludics, Curien, Faggian, Giamberardino~\cite{DBLP:conf/csl/CurienF05,DBLP:conf/csl/GiamberardinoF06} were able to formalize L-nets that quotient (abstract) proof trees w.r.t. commutation of tensors. This normalization goes further than the one given by Andreoli with focusing and polarization.
However, non series-parallel situations cannot be taken into account in their denotational model.
Works close to the research presented here include Bunched Implications~\cite{DBLP:journals/bsl/OHearnP99} and Deep Inference~\cite{DBLP:journals/tocl/Guglielmi07}.
But these two last frameworks seem to fail in keeping basic logical properties as focalization and polarization.
The line of research that is undertaken here introduces a syntactic novelty by considering that occurences of formulae may be shared by different sequents. This sharing induces a strict synchronization between different computations (i.e. developments of proofs) and new connectives may be defined that internalize this mechanism.

\section{A Multisequent Calculus}\label{sec:calculus}


Besides the classical multiplicative and additive connectives of Linear Logic, we introduce two new connectives ctimes $\ctimes$ and cpar $\cpar$ whose intended meaning is to model strict concurrency.

\begin{definition}
The {\em formulas}, denoted $A, B, \dots$, are built from atoms $p$, $q$, 
$\dots$, $p^{\perp}$, $q^{\perp}$, $\dots$,
constants $\un$, $\bot$, $\zero$, $\top$ and the following (linear) connectives:
\begin{itemize}
\item (parallel) multiplicative conjunction $\otimes$ (times) and 
disjunction $\Par$ (par),
\item (concurrent) multiplicative conjunction $\ctimes$ (ctimes) and 
disjunction $\cpar$ (cpar),
\item additive conjunction $\plus$ (plus) and disjunction $\with$ (with).
\end{itemize}

Negation is defined by De Morgan rules:
\center$\begin{array}{lp{.5cm}l}
(p)^{\perp} = p^{\perp}		&&	(p^{\perp})^{\perp} = p	\\
(A \otimes B)^{\perp} = B^{\perp} \Par A^{\perp}	&&	(A \Par B)^{\perp} = 
B^{\perp} \otimes A^{\perp} \\
(A \ctimes B)^{\perp} = B^{\perp} \cpar A^{\perp}	&&	(A \cpar B)^{\perp} = 
B^{\perp} \ctimes A^{\perp} \\
(A \plus B)^{\perp} = B^{\perp} \with A^{\perp}		&&	(A \with B)^{\perp} = 
B^{\perp} \plus A^{\perp} \\
\un^{\perp} = \bot ~~~~	\zero^{\perp} = \top 
&&	\bot^{\perp} = \un ~~~~ \top^{\perp} = \zero
\end{array}$

$\begin{array}{c}
A \PostImp B = A^{\perp} \Par B
\end{array}$
\end{definition}

\subsection{Structures of Multisequents}
\begin{definition}
A {\em formula context} $\Gamma$ has one of the two following forms:
\begin{itemize}
\item A formula $A$
\item A finite multiset of formula contexts separated by commas $\Delta_1, \dots, 
\Delta_n$.
',' is considered commutative and associative.
\end{itemize}
{\em Sequents} are of the form $\{\Gamma\}$, where $\Gamma$ is a formula context.
A {\em multisequent} is a finite multiset of sequents. 
Multisequents are denoted $\mathcal S, \mathcal T, \dots$ If a multisequent is reduced to one sequent, '$\{$' and '$\}$' may be omitted.
A multisequent may contain sequents that are not disjoint: an occurence of a formula may appear in different sequents. Such sequents
are said to be {\em linked}. Superscripts are put if different occurences of the same formula occur.
The {\em principal} formulas occurences of a (logical) rule are formulas from the hypotheses on which the rule applies. The {\em principal} sequents are sequents where the principal formulas occur.
\end{definition}

\begin{example}[multisequents]
$\{A,B\}\{C,D\}$: this multisequent involves four formulas and two (disjoint) sequents whereas
$\{A,B\}\{B,C\}\{C,D\}$ involves three sequents and four occurences of formulas. \end{example}

At first glance, rules given in sequent calculi may seem strange:
Throughout the paper, contexts of a principal occurence are identified by a free subscript. For example, 
$\{\Gamma_i, A\}$ means a multiset of sequents (the domain of the free subscript $i$) where the
{\em same} occurence $A$ appears. Note that the domain of $i$ cannot be empty.
Sequents that remain unchanged by a rule are either replaced by dots or by a notation for a multisequent. 
If a proof involves different occurences
of the same fomula (hence different contexts), these occurences are distinguished by a superscript:\footnote{A context $\Gamma$ with a supersript supposes the superscript for
each formula of the context.} remark this in the $\with$-rule in Fig.~\ref{fig:seq-imp}.

\subsection{A Naive (and Wrong) Attempt}

Lazy logic programming relies mainly on a lazy splitting of contexts when considering the $\otimes$ rule. The standard $\otimes$ rule is the following one:
$$
\infer[(\otimes)]{\{\Delta, \Gamma, A\otimes B\}}
                 { \{\Gamma, A\}
	             &
	            \{\Delta, B\}}
$$
In a bottom-up proof search, as it is the case in logic programming, applying this rule requires to know how to split the multiset $\Delta, \Gamma$. A lazy way consists in delaying this separation. Let us note $\cpar$ the connective $\otimes$ defined in a lazy way. Shifting to multisequents, this may be given by sharing the whole multiset $\Delta, \Gamma$ between the two sequents in the hypothesis (remember that occurences are shared between sequents if no superscript is present):
$$
\infer[(\cpar)]{\{\Delta, \Gamma, A\cpar B\}}
                 { \{\Delta, \Gamma, A\}
	           \{\Delta, \Gamma, B\}}
$$

We suppose further $\Par$ still dual to $\cpar$: $(A \cpar B)^{\perp} = B^{\perp} \Par A^{\perp}$. Following these guidelines, a system for a lazy Multiplicative Linear Logic (lazy MLL) is given in Fig.~\ref{badCLL}. However a counter-example to cut-elimination is easy to find:\\
$\{A^\perp, A\Par \bot\}$ ~~~and~~~ 
$\{A^\perp \cpar 1, A \Par \bot\}\{A^\perp \cpar 1, 1^1 \Par 1^2\}$ ~~~are provable.\\ 
But~~~~ 
$\{A^\perp, A \Par \bot\}\{A^\perp, 1^1 \Par 1^2\}$ ~~~~is not provable:

\settoheight{\hauteurAcentrer}{\scriptInfer[]{\{A^\perp \cpar 1, A \Par \bot\}\{A^\perp \cpar 1, 1^1 \Par 1^2\}}{
	\scriptInfer[]{\{A^\perp \cpar 1, A, \bot\}\{A^\perp \cpar 1, 1^1, 1^2\}}{
		\scriptInfer[]{\{A^\perp \cpar 1, A\}\{A^\perp \cpar 1, 1^1, 1^2\}}{
			\scriptInfer[]{\{A^\perp, A\}\{A^\perp, 1^1, 1^2\}\{1, A\}\{1, 1^1, 1^2\}}{
				\scriptInfer[]{\{A^\perp, A\}\{1^1\}\{1\}\{1^2\}}{
				}
			}
		}
	}
}
}%

$$
\raisebox{.25\hauteurAcentrer}{\scriptInfer[]{A^\perp, A\Par \bot}{
	\scriptInfer[]{A^\perp, A, \bot}{
		\scriptInfer[]{A^\perp, A}{
		}
	}
}}
\raisebox{.5\hauteurAcentrer}{\mbox{ ~~~and~~~ }} 
\scriptInfer[]{\{A^\perp \cpar 1, A \Par \bot\}\{A^\perp \cpar 1, 1^1 \Par 1^2\}}{
	\scriptInfer[]{\{A^\perp \cpar 1, A, \bot\}\{A^\perp \cpar 1, 1^1, 1^2\}}{
		\scriptInfer[]{\{A^\perp \cpar 1, A\}\{A^\perp \cpar 1, 1^1, 1^2\}}{
			\scriptInfer[]{\{A^\perp, A\}\{A^\perp, 1^1, 1^2\}\{1, A\}\{1, 1^1, 1^2\}}{
				\scriptInfer[]{\{A^\perp, A\}\{1^1\}\{1\}\{1^2\}}{
				}
			}
		}
	}
}
\raisebox{.5\hauteurAcentrer}{\mbox{ ~~~but~~~ }} 
\scriptInfer[]{\{A^\perp, A \Par \bot\}\{A^\perp, 1^1 \Par 1^2\}}{
	\scriptInfer[]{\{A^\perp, A, \bot\}\{A^\perp, 1^1, 1^2\}}{
		\scriptInfer[]{\{A^\perp, A\}\{A^\perp, 1^1, 1^2\}}{
			\scriptInfer[]{\{A^\perp, A\}\{1^1, 1^2\}}{
				\scriptInfer[]{\{1^1, 1^2\}}{ \textit{false}
				}
			}
		}
	}
}
$$

\begin{Fig.}
{\em Structural rules}
\[
\begin{array}{lclcl}
\infer[(d)]{\dots \{\Gamma_i, A\} \{A, \Delta\} \dots}
           {\dots \{\Gamma_i, A\} \{\Delta\} \dots}
&~~~~~~&
\infer[(s)]{{\mathcal S}_1 {\mathcal S}_2}
      {{\mathcal S}_1
      &
      {\mathcal S}_2}
&~~~~~~&
\infer[(e)]{\dots \{\Gamma, B, A, \Delta \} \dots}
      {\dots \{\Gamma, A, B, \Delta \} \dots}
\end{array}
\]

{\em Logical rules} (in rules $(1)$ and $(axiom)$, the multisequent consists of only one sequent)
\[
\begin{array}{ccc}
\multicolumn{3}{c}{\infer[(axiom)]{A, A^\perp}{}}
\\ \\
\infer[(1)]{1}{} 
&~~~~~~&
\infer[(\bot)]{\dots \{\Gamma_i,\bot\} \dots}
      {\dots \{\Gamma_i\} \dots}
\\ \\
\infer[(\Par)]{\dots \{\Gamma_i,A\Par B\} \dots}
              {\dots \{\Gamma_i,A, B\} \dots} 
&~~~~~~&
\infer[(\cpar)]{\dots \{\Gamma_i, A\cpar B\} \dots}
     {\dots \{\Gamma_i, A\} \{\Gamma_i,B\} \dots}
\end{array}
\]

{\em Cut rule}
\[
\infer[(cut)]{\dots \{\Gamma_i, \Delta_j\} \dots}
    {
    \dots \{\Gamma_i, A\} \dots
    &
    \dots \{\Delta_j, A^\perp\} \dots}
\]
\caption{Sequent calculus for a bad lazy MLL. $i,j \in \mathbb{N}^*$ in rules.}\label{badCLL}
\end{Fig.}

\subsection{The Calculus \CMALL}

In order to circumvent the previous situation, lazyness is modelled by means of two specific connectives $\cpar$ and $\ctimes$ besides the two multiplicative connectives $\Par$ and $\otimes$ of Linear Logic. The rules of the sequent calculus Concurrent Multiplicative Additive Linear Logic (\CMALL) are given in Fig.~\ref{fig:seq-imp}. The system includes a cloning structural rule (c), however one may note that proofs of cut elimination, asynchrony, ... we give in the following are still true without this rule. Examples of instantiation of the rules are given below to help the reader recover standard situations.

\begin{example}[Rule instantiation]
($A, B, X, Y, Z$ are formulas)\\
\[
\begin{array}{ccc}
\infer[(\otimes)]{\{X,Z,A\otimes B\}\{Y,Z,A\otimes B\}}{
	\{X,A\}\{Y,A\}
	&
	\{Z,B\}
}
&
\infer[(\cpar)]{\{X\}\{X,A\cpar B\}}{
	\{X\}\{X,A\}\{X,B\}
}
&
\infer[(\Par)]{\{X\}\{X,A\Par B\}}{
	\{X\}\{X,A,B\}
}
\end{array}
\]
\end{example}

It is easy to prove the following statements (multisequents may be given two-sided for easiness of reading):
\begin{itemize}
\item $A \otimes B \la A \cpar B$ is provable:\vspace*{-3mm}\\
\centerline{$
\scriptInfer[\Par]{\{A^\perp \Par B^\perp, A \cpar B\}}
      {\scriptInfer[\cpar]{\{A^\perp, B^\perp, A \cpar B\}}
             {\scriptInfer[d]{\{A^\perp, B^\perp, A\}\{A^\perp, B^\perp, B\}}
                    {\scriptInfer[d]{\{A^\perp, B^\perp, A\}\{B^\perp, B\}}
                           {\scriptInfer[s]{\{A^\perp, A\}\{B^\perp, B\}}
                                  {\scriptInfer[ax]{\{A^\perp, A\}}
                                             {}
                                  &
                                  \scriptInfer[ax]{\{B^\perp, B\}}
                                             {}
      }       }     }      }      }
$}

\item  $\cpar$ is asynchronous (lemma~\ref{lemma:asynchrony}) whereas $\otimes$ is synchronous. Although $\cpar$ does neither distribute over $\Par$, nor the converse. But $\cpar$ does distribute over $\with$: $A \cpar (B \with C) \dashv\vdash (A \cpar B) \with (A \cpar C)$ is provable\vspace*{1mm}\\
\hspace*{-2cm}$
	\scriptInferD{\{A \cpar (B \with C) , (A^\perp \ctimes B^\perp) \plus (A^\perp \ctimes C^\perp)\}}{
		\scriptInferD{\{A,(A^\perp \ctimes B^\perp) \plus (A^\perp \ctimes C^\perp)\}~~\{B \with C, (A^\perp \ctimes B^\perp) \plus (A^\perp \ctimes C^\perp)\}}{
			\scriptInferD{\{A^1,(A^\perp \ctimes B^\perp) \plus (A^\perp \ctimes C^\perp)^1\}~~\{B, (A^\perp \ctimes B^\perp) \plus (A^\perp \ctimes C^\perp)^1\}}{
				\scriptInferD{\{A^1,(A^\perp \ctimes B^\perp)^1\}~~\{B, (A^\perp \ctimes B^\perp)^1\}}{
					\scriptInferD{\{A^1,A^{\perp 1}\}}{
					}
					&
					\scriptInferD{\{B, B^{\perp 1}\}}{
					}
				}
			}
			&
			\scriptInferD{\{A^2,(A^\perp \ctimes B^\perp) \plus (A^\perp \ctimes C^\perp)^2\}~~\{C, (A^\perp \ctimes B^\perp) \plus (A^\perp \ctimes C^\perp)^2\}}{
				\scriptInferD{\{A^2,(A^\perp \ctimes C^\perp)^2\}~~\{C, (A^\perp \ctimes C^\perp)^2\}}{
					\scriptInferD{\{A^2,A^{\perp 2}\}}{
					}
					&
					\scriptInferD{\{C,C^{\perp 2}\}}{
					}
				}
			}	
		}
	}
$
	
	$$\scriptInferD{\{(A \cpar B) \with (A \cpar C), A^\perp \ctimes (B^\perp \plus C^\perp)\}}{
		\scriptInferD{\{A \cpar B, [A^\perp \ctimes (B^\perp \plus C^\perp)]^1\}}{
			\scriptInferD{\{A, [A^\perp \ctimes (B^\perp \plus C^\perp)]^1\}~~\{B, [A^\perp \ctimes (B^\perp \plus C^\perp)]^1\}}{
				\scriptInferD{\{A, A^{\perp 1}\}}{
				}
				&
				\scriptInferD{\{B, (B^\perp \plus C^\perp)^1\}}{
					\scriptInferD{\{B, B^{\perp 1}\}}{
					}
				}
			}
		}
		&
		\scriptInferD{\{A \cpar C, [A^\perp \ctimes (B^\perp \plus C^\perp)]^2\}}{
			\scriptInferD{\{A, [A^\perp \ctimes (B^\perp \plus C^\perp)]^2\}~~\{C, [A^\perp \ctimes (B^\perp \plus C^\perp)]^2\}}{
				\scriptInferD{\{A, A^{\perp 2}\}}{
				}
				&
				\scriptInferD{\{C, (B^\perp \plus C^\perp)^2\}}{
					\scriptInferD{\{C, C^{\perp 2}\}}{
					}
				}
			}
		}
	}$$

\item $\otimes \not\equiv \cpar$
Remark that $\{\un, \bot \otimes \bot\}$ is not provable, but $\{\un, 
\bot \cpar \bot\}$ is provable (the two $\bot$ are indexed to distinguish 
them, however these two denote the same constant; note also that there is 
only one occurence of $\un$ throughout the proof):
\vspace{-.4cm}
$$
\scriptInfer[\cpar]{\un,\bot^1 \cpar \bot^2}
     {\scriptInfer[\bot]{\{\un,\bot^1\}\{\un, \bot^2\}}
            {\scriptInfer[\bot]{\{\un\}\{\un, \bot^2\}}
                   {\scriptInfer[w]{\{\un\}\{\un\}}
                          {\scriptInfer[\un]{\un}
                                    {}
                          }
                   }
            }
     }
$$
\end{itemize}


\begin{proposition}
The system enjoys cut-elimination: if $\mathcal S$ is a provable multisequent, 
then there exists at least one cut-free proof of $\mathcal S$.
\end{proposition}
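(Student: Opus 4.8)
The plan is to prove cut-elimination by the standard route: exhibit a terminating rewriting system on proofs that pushes cut instances upward toward the axioms and eliminates them, measuring progress with a suitable induction. The main subtlety, compared with ordinary \MALL, is that cut now acts on an occurrence $A$ that may be shared among several sequents $\{\Gamma_i, A\}$ (the domain of $i$), simultaneously against an occurrence $A^\perp$ shared among $\{\Delta_j, A^\perp\}$, and that the structural rules $(d)$, $(s)$, $(c)$ and the new connectives $\ctimes/\cpar$ can be interspersed between the cut and the rule that introduces the cut formula. So the case analysis must be organized around \emph{which} rule last acted on the principal sequents on each side of the cut.

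\textbf{Setup and measure.} First I would fix a notion of proof size: to each cut instance associate the pair consisting of the size (number of connectives) of the cut formula $A$ and the sum of the heights of the two subproofs feeding it, ordered lexicographically; then take the multiset of these pairs over all cuts in the proof, ordered by the multiset extension. A \emph{topmost} cut (one with no cut above it) is chosen to be reduced first, so that all the material above it is cut-free. I would state the usual one-step reduction lemma: any proof with a cut reduces, by commutative or key/logical steps, to a proof whose measure is strictly smaller, whence the result by well-founded induction. (I would also remark, as announced in the excerpt, that the rule $(c)$ plays no role here — it is never needed to fire a reduction — so the argument goes through verbatim for the system without $(c)$.)

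\textbf{Case analysis.} The commutative cases: if the last rule on one side acts on a formula of some $\Gamma_i$ (or on a sequent disjoint from the principal ones, or is a structural rule $(d)$, $(s)$, $(e)$, $(c)$, or a $(\bot)$, $(\Par)$, $(\otimes)$, $(\cpar)$, $(\with)$, $(\plus)$ whose principal formula is not the cut formula), permute it below the cut; height decreases on that side while the cut formula is unchanged. Here the shared-occurrence bookkeeping is the delicate point: when the cut formula $A$ lives in several sequents $\{\Gamma_i,A\}$ and a structural step like $(s)$ or $(d)$ reorganizes those sequents, one must check the permuted proof is still well-formed and that the conclusion is unchanged up to the structural identifications — this is routine but must be done with care for $(\ctimes)$ against its context and for the two-premise $(\otimes)$/$(cut)$ where the context of $A$ is split. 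The key cases: the last rule on \emph{both} sides is the one introducing $A$ / $A^\perp$. For atoms against $(axiom)$ the cut disappears. For $\otimes$ vs.\ $\Par$, $\ctimes$ vs.\ $\cpar$, $\plus$ vs.\ $\with$, $\un$ vs.\ $\bot$: replace the cut on $A$ by one or two cuts on the immediate subformulas (for $\with/\plus$, the $\plus$ premise selects which $\with$ premise to keep), each of strictly smaller formula size, so the measure drops. The $\ctimes$/$\cpar$ key case mirrors the $\otimes/\Par$ one but propagates through the multiset-of-sequents structure that $\ctimes$ creates; I would treat it in parallel with $\otimes/\Par$.

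\textbf{The expected obstacle.} The hard part will be the interaction of a cut on a \emph{shared} occurrence with the structural rules $(s)$ and $(d)$ and with $(\ctimes)$: one must be sure that permuting the cut below these rules (or conversely) does not duplicate or merge the cut in a way that breaks termination, and that when $A$ occurs in several sequents the ``key'' reduction against $\Par$ or $\cpar$ can be carried out uniformly in all of them. I would handle this by proving an auxiliary lemma that a cut on an occurrence $A$ distributed over sequents $\{\Gamma_i,A\}_i$ can always be traced back, reading the subproof upward, to the point where $A$ was introduced, with only $\Gamma$-affecting or disjoint-sequent rules in between — essentially a sub-formula/occurrence-tracking invariant — after which the commutations are forced and the measure argument applies cleanly. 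With that lemma in hand the remaining cases are mechanical, and cut-elimination follows.
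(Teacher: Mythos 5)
Your proposal is correct and follows essentially the same route as the paper: the paper packages your commutative cases into two permutation lemmas (asynchrony of $\Par,\cpar,\with$ and synchrony of the cut rule) so that its main induction on height only has to treat the key cases where dual rules introduce the cut formula, and your key cases (axiom, $1/\bot$, $\otimes/\Par$, $\ctimes/\cpar$, the modalities) match the paper's case analysis. The shared-occurrence bookkeeping you flag as the main obstacle is exactly what the paper's asynchrony and synchrony lemmas work through rule by rule, so no genuinely different idea is involved.
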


The proof of cut-elimination (see annex) relies mainly on a reconstruction of proofs in case the two last rules concern the cut formulas, and on the three following lemmas that allow the commutation of rules. The standard definition of the height of a proof is generalized: the height of the proof of a multisequent is the
maximum of the heights of each partial proof.

\begin{lemma}[Separability]
Let $\mathcal S$ and $\mathcal T$ be disjoint multisequents (i.e. there is no occurence of formulas shared by $\mathcal S$ and $\mathcal T$), the multisequent $\mathcal S \mathcal T$ is provable iff $\mathcal S$ is provable and $\mathcal T$ is provable. 
\end{lemma}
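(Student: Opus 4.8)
The plan is to prove both directions by induction on the height of a cut-free proof (which exists for provable multisequents by the cut-elimination proposition, though the statement itself does not presuppose it). The right-to-left direction is the easy one: given cut-free proofs of $\mathcal{S}$ and of $\mathcal{T}$ that share no formula occurrences, the structural rule $(s)$ applied to these two proofs yields a proof of $\mathcal{S}\mathcal{T}$ directly; disjointness guarantees that $(s)$ is applicable. The substance of the lemma is the left-to-right direction, so I would concentrate the induction there.

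For the forward direction, suppose $\mathcal{S}\mathcal{T}$ is provable, with $\mathcal{S}$ and $\mathcal{T}$ disjoint. I would case-split on the last rule $r$ of a (cut-free) proof of $\mathcal{S}\mathcal{T}$ and on where its principal sequents lie. If every principal sequent of $r$ belongs to $\mathcal{S}$ (the $\mathcal{T}$ case is symmetric), then $r$ acts entirely within $\mathcal{S}$: the premise is of the form $\mathcal{S}'\mathcal{T}$ with $\mathcal{S}'$ still disjoint from $\mathcal{T}$ (the principal and side formulas introduced or modified by $r$ occur only in sequents of $\mathcal{S}$, by the hypothesis on principal sequents and disjointness), so the induction hypothesis gives provability of $\mathcal{S}'$ and of $\mathcal{T}$, and re-applying $r$ to the proof of $\mathcal{S}'$ yields a proof of $\mathcal{S}$ while the proof of $\mathcal{T}$ is already in hand. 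The remaining case is $r = (s)$ with the split of $(s)$ cutting across the $\mathcal{S}/\mathcal{T}$ boundary; here the premises are $\mathcal{S}_1$ and $\mathcal{S}_2$ with $\mathcal{S}_1\mathcal{S}_2 = \mathcal{S}\mathcal{T}$, and since $(s)$ requires its two premises to be disjoint, each of $\mathcal{S}_1,\mathcal{S}_2$ splits cleanly along the boundary into a part inside $\mathcal{S}$ and a part inside $\mathcal{T}$; applying the induction hypothesis to both premises and recombining with $(s)$ gives proofs of $\mathcal{S}$ and of $\mathcal{T}$ separately.

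The main obstacle is handling the rules whose notation involves a free subscript ranging over a nonempty set of sequents — $(\bot)$, $(\Par)$, $(\cpar)$, and the cloning rule $(c)$ — together with the shared-occurrence convention. I must check that when such a rule's family of principal sequents $\{\Gamma_i,\dots\}$ lies inside $\mathcal{S}$, restricting to the premise does not accidentally disconnect a shared occurrence that also lives in $\mathcal{T}$; disjointness of $\mathcal{S}$ and $\mathcal{T}$ is exactly what rules this out, so the delicate point is to state the invariant precisely (the shared occurrence touched by $r$ never straddles the boundary) and verify it rule by rule. The axiom and $(1)$ base cases are immediate since their conclusion is a single sequent, so one of $\mathcal{S},\mathcal{T}$ is empty and the statement is trivial. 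Once the subscript-indexed rules are dispatched, the cut rule case does not arise because we work with cut-free proofs, and the whole induction closes.
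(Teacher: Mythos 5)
Your overall strategy coincides with the paper's: the right-to-left direction is an application of the separation rule $(s)$, and the substantive direction rests on the observation that for every rule other than $(s)$ the principal sequents are linked in the conclusion, so each rule application acts entirely inside one of the two disjoint components. Organizing this as an induction on proof height, with the $(s)$ case handled by intersecting each premise with the $\mathcal{S}/\mathcal{T}$ partition, is a faithful elaboration of the paper's two-line sketch.

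The one genuine problem is your appeal to cut-elimination in order to restrict attention to cut-free proofs. In this paper the dependency runs the other way: the cut-elimination proposition is proved \emph{using} the three commutation lemmas, Separability among them, so invoking cut-elimination inside the proof of Separability is circular. The detour is also unnecessary: the cut rule is covered by the same observation you use for the logical rules. Its conclusion sequents $\{\Gamma_i,\Delta_j\}$ are linked through the shared contexts $\Gamma_i$ and $\Delta_j$, hence all lie in $\mathcal{S}$ or all in $\mathcal{T}$, and each premise again decomposes into a principal part together with a residue disjoint from it, to which the induction hypothesis applies before re-applying the cut. Handling cut directly in the induction removes the circularity and matches what the paper's argument implicitly does. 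A smaller point: your $(s)$ case tacitly assumes that the two premises of $(s)$ are occurrence-disjoint; this holds because they head independent subproofs, but it deserves to be stated, since it is exactly what licenses splitting each premise along the $\mathcal{S}/\mathcal{T}$ boundary.
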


\begin{lemma}[Asynchrony]\label{lemma:asynchrony}
The connectives $\Par, \with, \cpar$ are asynchronous: let R be an 
inference rule of one of these connectives (denoted $\circ$ below), let 
$\mathcal S$ be a provable sequent of proof
$$
\infer{\dots \{A\circ B, \Gamma\} \dots}
    {\infer{\dots \: \vdots \: \dots}
         {\infer[\mbox{R on }A\circ B]{\dots \{A\circ B, \Gamma\} \dots}
              {\mathcal T}
          }
     }
$$
then there exists a proof of the same height of $\mathcal S$ with R as the last rule.
\end{lemma}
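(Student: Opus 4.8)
The plan is to prove this permutability property by pushing the distinguished application of $R$ downwards through the proof until it becomes the last inference. Let $\pi$ be the displayed proof of $\mathcal S$ and let $\rho$ be the inference of $\pi$ immediately below the marked occurrence of $R$; if there is none, $R$ is already the last rule and we are done. The core of the argument is a \emph{commutation step}: for every rule $\rho$ of \CMALL{} I would transform
$$\infer[\rho]{\ldots}{\infer[R]{\ldots}{\mathcal T} & \ldots}$$
into a proof of the same conclusion, of no greater height, in which a copy of $\rho$ is applied inside the premises of $R$, so that the marked $R$ becomes the lower of the two inferences. Iterating this and inducting on the number of inferences below the marked $R$ -- equivalently, inducting on the height of $\pi$ with the max-over-partial-proofs convention recalled just above -- produces a proof of $\mathcal S$ having $R$ as its last rule.

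For the commutation step I would proceed by cases on $\rho$. The observation that makes every case go through is that the marked occurrence $A\circ B$ is not principal in $\rho$ (it survives into $\mathcal S$, and an occurrence is introduced by at most one inference); hence $\rho$ neither requires $A\circ B$ to be recomposed nor is blocked by its decomposition. Indeed the premise(s) of $R$ differ from its conclusion only by replacing $A\circ B$ with $A,B$ (case $\Par$), by the two linked sequents $\{A,\Gamma\}\{B,\Gamma\}$ (case $\cpar$), or by splitting the ambient multisequent into the two branches carrying $A$ and $B$ (case $\with$), and in each case $\rho$ can simply be replayed on the decomposed premise(s). The routine cases are the logical rules acting on other formulas, the $(1)$, $(\bot)$ and $(axiom)$ rules, the exchange rule, and the $(cut)$ rule; the two-premise rules ($\otimes$, $\ctimes$, $\with$, $(cut)$) require $\rho$ to be duplicated into both premises of $R$ when $R=\with$, or the two premises of $\rho$ to be treated symmetrically, and it is exactly here that the $\max$-convention for heights is needed to keep the height unchanged.

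The main obstacle -- the case I would spend most effort on -- is the interaction of $R$ with the sharing and cloning structural rules $(d)$ and $(c)$ when these act on the occurrence $A\circ B$ itself. Since $R$ introduces $A\circ B$ as a single occurrence shared among several sequents, once $R$ is moved below such a cloning step the connective must be reintroduced by one inference acting simultaneously on all the sequents concerned; so $\rho$ has to be pushed above $R$ in a form adapted to the decomposed premise (for instance, cloning $A$ and $B$ wherever $(d)$ cloned $A\circ B$), and one must verify that this local rearrangement does not lengthen the proof. I would handle this using the $\max$ convention together with the Separability lemma, which lets disjoint components of the multisequent be rebuilt independently. Once this case is dealt with, the induction closes and the lemma follows.
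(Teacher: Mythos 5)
Your proposal is correct and follows essentially the same route as the paper's proof: an induction on the height, reduced to a local commutation step showing that the rule immediately below the marked application of $R$ can be permuted above it, with the delicate cases being exactly the structural rules $(d)$ and $(c)$ acting on the shared occurrence $A\circ B$, which are resolved by replaying them separately on $A$ and on $B$ (and on both linked sequents in the $\cpar$ case), the max-over-branches height convention absorbing the duplication. The only cosmetic difference is your appeal to the Separability lemma, which the paper's argument does not need at this point.
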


\begin{lemma}[Synchrony of the cut rule]
The cut rule is synchronous, i.e. let a proof of $\mathcal S$ be of the form in the left hand side (R is a rule), then one can build a proof of the same height of $\mathcal S$ of the form in the right hand side:
\vspace{-.2cm}
$$
\begin{array}{ccc}
\infer[cut]{\mathcal S}
	{
	\infer[R]{{\mathcal W}_1[A]}
		{{\mathcal U}[A]}
	&
	{\mathcal V}[A^\perp]
	}
& \hspace{3cm} &
\infer[R]{\mathcal S}
	{
	\infer[cut]{{\mathcal W}_2}
		{
		{\mathcal U}[A]
		&
		{\mathcal V}[A^\perp]
		}
	}
\end{array}
$$
\end{lemma}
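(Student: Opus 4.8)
The plan is to argue by a case analysis on the rule $R$ that derives $\mathcal{W}_1[A]$ from $\mathcal{U}[A]$, showing that in each case the cut can be transposed upward past $R$. The guiding observation is that, since $R$ is not principal on the cut formula (the occurrence $A$ still appears in $\mathcal{W}_1$), the occurrence $A$ is merely a spectator of $R$ — it sits inside one of the premises of $R$, or, for the additive rules, inside each premise — while, dually, the principal occurrence of $R$ is untouched by the cut against $\mathcal{V}[A^\perp]$. So I would first cut the premise(s) of $R$ that carry $A$ against $\mathcal{V}[A^\perp]$, obtaining $\mathcal{W}_2$, and then let $R$ fire; since the transformation only swaps two adjacent inferences on the branch leading to $\mathcal{W}_1$, one checks directly that the conclusion is again $\mathcal{S}$ and that the height is unchanged.

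For the easy cases — $R \in \{(e),(\Par),(\plus),(\bot)\}$ — one simply recomputes the two multisequents ``cut then $R$'' and ``$R$ then cut'' and sees they coincide: the cut merges the $A$-sequents with the $A^\perp$-sequents of $\mathcal{V}$ and carries everything else along, leaving the principal occurrence and surrounding contexts of $R$ in place. The binary rules need a remark on where the cut formula can live: for $(\otimes)$ and $(\ctimes)$ the two premises are disjoint, so $A$ lies in exactly one of them and the cut is pushed into that premise only, whereas for the additive $(\with)$ the context is shared between the premises, so one must cut a copy of the derivation of $\mathcal{V}[A^\perp]$ against each premise. The rules $(axiom)$, $(1)$ and $(\top)$ cannot occur as $R$ (they have no premise carrying $A$), and $R=(cut)$ is a routine commutation of two cuts acting on disjoint formula occurrences.

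The genuinely delicate cases, and where I expect the main difficulty, are the structural rules that rearrange the linking of sequents — $(s)$, $(d)$, $(c)$ — together with the asynchronous rule $(\cpar)$, all of which involve occurrences shared between several sequents. Here one has to track carefully how the cut rule redistributes occurrences: it merges \emph{each} $A$-sequent of the premise of $R$ with \emph{each} $A^\perp$-sequent of $\mathcal{V}$, so if two sequents in the premise of $R$ shared an occurrence — the common context $\Gamma_i$ of a $(\cpar)$ step, or the two linked copies produced by $(d)$ or $(c)$ — then the resulting merged sequents still share the corresponding occurrences, which is exactly what is needed for $R$ to remain applicable afterwards and to reconstruct the same linked structure; for $(s)$ one additionally uses that the cut formula is proved entirely within one of the two juxtaposed sub-multisequents, so the cut commutes with the juxtaposition. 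The bookkeeping that must be gotten right throughout is that the free-subscript conventions identifying which sequents each rule ranges over are preserved when the cut merges the cut-formula sequents, and that the height count is not affected (in particular by the duplication of $\mathcal{V}$'s derivation in the $(\with)$ case).
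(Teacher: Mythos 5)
Your proposal is correct and follows essentially the same route as the paper, which disposes of this lemma in one line by appeal to the same rule-permutation analysis used for the asynchrony lemma; your case split (spectator occurrence of $A$, disjoint premises for $\otimes$/$\ctimes$, duplication of the derivation of $\mathcal{V}[A^\perp]$ for $\with$, careful tracking of shared occurrences under $(s)$, $(d)$, $(c)$, $(\cpar)$) is exactly the bookkeeping that argument requires. Your remark that the generalized (max-based) height absorbs the duplication in the additive case is the right justification for the height claim.
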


%

\begin{Fig.}
{\em Structural rules}
\[
\begin{array}{lclcl}
\infer[(c)]{\dots \{\Delta\} \{\Delta\} \dots}
           {\dots \{\Delta\} \dots}
&~~~~~~&
\infer[(d)]{\dots \{\Gamma_i, A\} \{A, \Delta\} \dots}
           {\dots \{\Gamma_i, A\} \{\Delta\} \dots}
&~~~~~~&
\infer[(s)]{{\mathcal S}_1 {\mathcal S}_2}
      {{\mathcal S}_1
      &
      {\mathcal S}_2}
\end{array}
\]

{\em Logical rules} (in rules $(1)$ and $(axiom)$, the multisequent consists of only one sequent)
\[
\begin{array}{cc}
\multicolumn{2}{c}{\infer[(axiom)]{A, A^\perp}{}}
\\ \\
\infer[(1)]{1}{} 
&
\infer[(\bot)]{\dots \{\Gamma_i,\bot\} \dots}
      {\dots \{\Gamma_i\} \dots}
\\ \\
\infer[(\top)]{\dots \{\Gamma_i, \top\} \dots}
     {\dots} 
&
\mbox{no rule for 0}
\\ \\
\infer[(\otimes)]{\dots \{\Delta_j, \Gamma_i, A\otimes B\} \dots}
                 { \dots \{\Gamma_i, A\} \dots
	             &
	              \dots \{\Delta_j, B\} \dots }
&
\infer[(\Par)]{\dots \{\Gamma_i,A\Par B\} \dots}
              {\dots \{\Gamma_i,A, B\} \dots} 
\\ \\
\infer[(\ctimes)]{\dots \{\Gamma_i, A\ctimes B\} 
                        \{\Delta_j, A\ctimes B\} 
                 \dots}
      {\dots \{\Gamma_i, A\} \dots
	  &
	   \dots \{\Delta_j, B\} \dots}
&
\infer[(\cpar)]{\dots \{\Gamma_i, A\cpar B\} \dots}
     {\dots \{\Gamma_i, A\} \{\Gamma_i,B\} \dots}
\\ \\
\begin{array}{ll}
\infer[(\plus_1)]{\dots \{\Gamma_i, A\plus B\} \dots}
     {\dots \{\Gamma_i, A\} \dots} 
&
\infer[(\plus_2)]{\dots \{\Gamma_i, A\plus B\} \dots}
     {\dots \{\Gamma_i, B\} \dots}
\end{array} &
\infer[(\with)]{{\mathcal S} \{\Gamma_i, A\with B\}}
    {
	{\mathcal S}^1 \{\Gamma_i^1, A\}
	&
	{\mathcal S}^2 \{\Gamma_i^2,B\}}
\end{array}
\]

{\em Cut rule}
\[
\infer[(cut)]{\dots \{\Gamma_i, \Delta_j\} \dots}
    {
    \dots \{\Gamma_i, A\} \dots
    &
    \dots \{\Delta_j, A^\perp\} \dots}
\]
\caption{Sequent calculus for \CMALL}
\label{fig:seq-imp}
\end{Fig.}

\section{Shared and Unshared Modalities}\label{sec:modalities}


Modalities may be added to the system in the spirit of exponentials in Soft Linear Logic~\cite{DBLP:journals/tcs/Lafont04}.
They are written as upperscripts on formulas: $A^s$ and $A^u$.
The sharing $.^s$ modality (resp. the unsharing $.^u$) is reminiscent of the why-not $?$ (resp. the of-course $!$). Rules are completed with the ones given below:

$$
\begin{array}{lcl}
\infer[(.^s)]{\dots \{\Gamma_i, \Delta_j^s\} \dots}
		   {\dots \{\Gamma_i^j, \Delta_j\} \dots}
&~~~~~~~~&
\infer[(.^u)]{\dots \{\Gamma_i^s, A^u\} \dots}
	{\dots \{\Gamma_i, A\} \dots}
\end{array}
$$

\begin{proposition}
Cut-elimination for \CMALL\ with modalities is valid.
\end{proposition}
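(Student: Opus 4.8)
The plan is to extend, rule by rule, the cut-elimination procedure already established for \CMALL. Recall that this procedure eliminates a topmost cut by pushing it upward, the crucial situation being the one in which the last rules of both premises act on the cut formula, while every other situation is reduced to it by means of the three commutation lemmas (Separability, Asynchrony (Lemma~\ref{lemma:asynchrony}), Synchrony of the cut rule) together with a suitable induction on the size of the cut formula and the (generalised) height. So the first task is to check that these three lemmas survive the addition of the rules $(.^s)$ and $(.^u)$.

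Separability is immediate, since neither modality rule connects two previously disjoint components of a multisequent. For Asynchrony, the sharing modality $.^s$, which here plays the role of $?$, should be added to the list $\Par,\with,\cpar$: one verifies that an application of $(.^s)$ permutes below any other rule, the only delicate point being that its premise mentions superscripted copies $\Gamma_i^j$ of the shared context, which can always be produced as late as one wishes; the unsharing modality $.^u$ is synchronous, like $!$, and is not added. For Synchrony of the cut rule, two new cases arise, according to whether the rule above the cut is $(.^s)$ or $(.^u)$; in both the cut simply permutes upward, the $(.^s)$ case again using that the copies in the premise are harmless.

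With the lemmas available, it remains to add the new reduction steps. The commutative steps are routine: a cut one of whose premises is the conclusion of $(.^s)$ or of $(.^u)$, the cut formula not being principal, is pushed above that rule. The genuinely new case is the principal one: a cut between an occurrence $A^u$ introduced by $(.^u)$ — whose subproof, by the very shape of the rule, already has its whole context carrying the modality $.^s$ — and its dual $(A^\perp)^s$ introduced by $(.^s)$. This is treated exactly as the $!/?$ commutative conversion of linear logic, in the \emph{soft} form of Soft Linear Logic: the $(.^u)$-subproof is replicated, once for each sequent carrying the occurrence $(A^\perp)^s$, its internal occurrences renamed by fresh superscripts; each replica is cut, on the un-modalised formula $A$ against $A^\perp$, with the corresponding part of the $(.^s)$-premise; and the contexts of the replicas, still modal up to renaming, are finally recombined by new applications of $(.^s)$, together with the structural rules $(c)$, $(d)$, $(s)$ as needed to realign occurrences. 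Because the modalities are soft, the number of replicas is bounded by data already present in the $(.^s)$-premise, so a suitable multiset measure on the proof strictly decreases and the induction goes through.

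The main obstacle is precisely this principal step, and within it the bookkeeping of shared occurrences: when the $(.^u)$-subproof is duplicated one must rename, via superscripts, every occurrence it shares with the ambient multisequent so that the result is still a legal \CMALL\ proof with modalities, and then choose the right sequence of $(.^s)$ and structural steps so that the recombined conclusion is exactly the multisequent demanded by the original cut. Once this orchestration is fixed, the Synchrony lemma for the cut rule supplies what is needed to bring every remaining configuration either to this case or to one already handled for \CMALL\ without modalities, and the proposition follows.
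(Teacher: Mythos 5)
Your scaffolding --- extend the three commutation lemmas to the new rules, classify $(.^u)$ as synchronous and $(.^s)$ as asynchronous, and reduce everything to the principal $(.^s)/(.^u)$ cut --- is exactly the paper's strategy. The gap is in the principal step itself. You import the $!/?$ (soft-promotion/multiplexing) conversion and replicate the $(.^u)$-subproof once per sequent carrying the occurrence $(A^\perp)^s$, cutting each replica against ``the corresponding part'' of the $(.^s)$-premise. That is not a legal move in this calculus: the cut rule acts on whole multisequents and on a \emph{single} occurrence possibly shared by several sequents, so there is no ``part'' to cut a replica against, and duplicating per sequent an occurrence that is meant to be shared destroys precisely the sharing the calculus is built on. It also misreads the $(.^s)$ rule: what gets multiplexed in its premise are the ambient contexts $\Gamma_i$ (which appear there as superscripted copies $\Gamma_i^j$), not the formulas $\Delta_j$ receiving the modality; the cut formula $A=\Delta_{j_0}$ occurs exactly once, unsuperscripted, in that premise.

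The paper's reduction needs no replication at all: perform one cut, on the unmodalised $A$ against $A^\perp$, between the premise of $(.^s)$ and the premise of $(.^u)$ (whose context $\Phi_k$ is there still unmodalised), and then apply $(.^s)$ once below that cut; this single application simultaneously restores the modality on the original $\Delta_j$'s and places it on $\Phi_k$, yielding the required conclusion $\dots\{\Gamma_i,\Delta_{j'}^s,\Phi_k^s\}\dots$. Both subproofs are used exactly once, the height strictly decreases, and the ``suitable multiset measure'' you invoke to bound the number of replicas is unnecessary --- that termination worry is an artefact of the replication you introduced.
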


\begin{example}(Rule instantiation)
\[
\begin{array}{lcl}
\scriptInfer[(.^s)]{\{X, A^s\} \{Y, A^s\} \{X, B^s\} \{Y, B^s\}}
		   {\{X^1, A\} \{Y^1, A\} \{X^2, B\} \{Y^2, B\}}
&~~~~~~~~&
\scriptInfer[(.^u)]{\{X^s, A^u\} \{Y^s, A^u\}}
	{\{X, A\} \{Y, A\}}
\end{array}
\]
\end{example}


The sharing modality enjoys the following property: $A \la A^s \cpar A^s$ is provable
$$
\scriptInfer[\cpar]{A^\perp, A^s \cpar A^s}
	{\scriptInfer[.^s]{\{A^\perp, A^s\} \{A^\perp, A^s\}}
		{ \scriptInfer[{\rm ax}]{A^{\perp 1}, A^1}{}
		&
		  \scriptInfer[{\rm ax}]{A^{\perp 2}, A^2}{}
		}
	}
$$

The previous example shows that a unique resource $A$ may be used for two different actions: let us suppose a system has one resource $A$, and a set of processes each needing 
one resource $A$, may we run them together ? The answer is yes if two conditions
are satisfied: (i) each process accepts to share its needed resource with others,
(ii) the processes run concurrently.
We formalize each process ${\mathcal P}_i$ as $A^s \ConcImp R_i$ where $A \ConcImp B = A^\perp \cpar B$ ($R_i$ is
the formula modelling the result of ${\mathcal P}_i$): this answers condition (i).
Concurrence between processes is denoted as ${\mathcal P}_1 \ctimes \dots \ctimes {\mathcal P}_n$.
We have then the following provable and non-provable two-sided sequents ($1\le i\le n$):
$$
A, {\mathcal P}_i \vdash R_i
$$
$$
A, {\mathcal P}_1 \ctimes \dots \ctimes {\mathcal P}_n \vdash R_1 \ctimes \dots \ctimes R_n
$$
$$
A, {\mathcal P}_1 \otimes \dots \otimes {\mathcal P}_n \not\vdash R_1 \otimes \dots \otimes R_n 
$$

The fact that the third sequent is not provable is obvious (even if each process is modelled $A \la R_i$ !). We just give the proofs for the two others (we set $n = 2$ in the
second proof for sake of clarity).

$$
\scriptInfer[\otimes]{A^\perp, A^s\otimes R_i^\perp, R_i}
	{ \scriptInfer[{\rm ax}]{R_i^\perp, R_i}{}
	&
	  \scriptInfer[(.^s)]{A^\perp, A^s}
		{\scriptInfer[{\rm ax}]{A^\perp, A}{}}
	}
$$

$$
\scriptInfer[\cpar]{A^\perp, (A^s\cpar R_1^\perp) \cpar (A^s\cpar R_2^\perp), R_1 \ctimes R_2}
	{\scriptInfer[\cpar]{	\{A^\perp, A^s\cpar R_1^\perp, R_1 \ctimes R_2\}
				\{A^\perp, A^s\cpar R_2^\perp, R_1 \ctimes R_2\}}
		{\scriptInfer[\cpar]{	\{A^\perp, A^{s 1}, R_1 \ctimes R_2\}
					\{A^\perp, R_1^\perp, R_1 \ctimes R_2\}
					\{A^\perp, A^s\cpar R_2^\perp, R_1 \ctimes R_2\}}
			{\scriptInfer[d]{	\{A^\perp, A^{s 1}, R_1 \ctimes R_2\}
						\{A^\perp, R_1^\perp, R_1 \ctimes R_2\}
						\{A^\perp, A^{s 2}, R_1 \ctimes R_2\}
						\{A^\perp, R_2^\perp, R_1 \ctimes R_2\}}
				{\scriptInfer[d]{	\{A^\perp, A^{s 1}, R_1 \ctimes R_2\}
							\{R_1^\perp, R_1 \ctimes R_2\}
							\{A^\perp, A^{s 2}, R_1 \ctimes R_2\}
							\{A^\perp, R_2^\perp, R_1 \ctimes R_2\}}
					{\scriptInfer[d]{	\{A^\perp, A^{s 1}\}
								\{R_1^\perp, R_1 \ctimes R_2\}
								\{A^\perp, A^{s 2}, R_1 \ctimes R_2\}
								\{A^\perp, R_2^\perp, R_1 \ctimes R_2\}}		
						{\scriptInfer[d]{	\{A^\perp, A^{s 1}\}
									\{R_1^\perp, R_1 \ctimes R_2\}
									\{A^\perp, A^{s 2}\}
									\{A^\perp, R_2^\perp, R_1 \ctimes R_2\}}		
							{\scriptInfer[s]{	\{A^\perp, A^{s 1}\}
										\{R_1^\perp, R_1 \ctimes R_2\}
										\{A^\perp, A^{s 2}\}
										\{R_2^\perp, R_1 \ctimes R_2\}}
								{ \scriptInfer[(.^s)]{	\{A^\perp, A^{s 1}\}
											\{A^\perp, A^{s 2}\}}
									{\scriptInfer[s]{	\{A^{\perp 1}, A^{1}\}
												\{A^{\perp 2}, A^{2}\}}
										{ \scriptInfer[{\rm ax}]{\{A^{\perp 1}, A^{1}\}}{}
										&
										  \scriptInfer[{\rm ax}]{\{A^{\perp 2}, A^{2}\}}{}
										}
									}
								&
							  	  \scriptInfer[\ctimes]{	\{R_1^\perp, R_1 \ctimes R_2\}
												\{R_2^\perp, R_1 \ctimes R_2\}}
									{ \scriptInfer[{\rm ax}]{\{R_1^\perp, R_1\}}{}
									&
									  \scriptInfer[{\rm ax}]{\{R_1^\perp, R_1\}}{}
	}	}	}	}	}	}	}	}	}
$$

%
%
%

\section{Conclusion}\label{sec:conclusion}


An original logic calculus (with variants) is presented that is a conservative extension of Linear Logic, at the theoretical level, and at the language level. 
The motivation beneath this work concerns lazy evaluation,
true concurrency and interferences in proof search.
We show that cut elimination is false if one considers a naive approach. The calculus \CMALL\ adds two new connectives to deal with multisequent structures. It has the cut-elimination property. Extensions are proposed that give first results concerning our objectives.

\bibliographystyle{splncs}
\bibliography{Fouquere}


\newpage
\section{Annex: Sequent Calculus}
%



We only give sketches of the proofs.

\begin{lemma}[Separability]
Let $\mathcal S$ and $\mathcal T$ be disjoint multisequents (i.e. there are no occurences of formulae appearing in $\mathcal S$ and 
in $\mathcal T$), the multisequent $\mathcal S \mathcal T$ is provable iff $\mathcal S$ is provable and $\mathcal T$ is provable. 
\end{lemma}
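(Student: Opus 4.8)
The plan is to prove the two directions separately, with the forward direction (from provability of $\mathcal S \mathcal T$ to provability of each of $\mathcal S$ and $\mathcal T$) being the substantive one. The backward direction is essentially immediate: given a proof of $\mathcal S$ and a proof of $\mathcal T$, a single application of the structural rule $(s)$ to the two proofs yields a proof of $\mathcal S \mathcal T$, since $(s)$ takes two multisequents and returns their multiset union, and the disjointness hypothesis guarantees no accidental identification of occurrences is forced.

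For the forward direction I would argue by induction on the height of a proof $\pi$ of $\mathcal S \mathcal T$. The key observation is that, because $\mathcal S$ and $\mathcal T$ share no formula occurrences, every rule of the calculus acts "locally" on one side: the principal occurrences of any logical rule, and the occurrences touched by any structural rule $(c)$, $(d)$, $(s)$, all lie entirely within the sequents coming from $\mathcal S$ or entirely within those coming from $\mathcal T$. (One checks this rule by rule from Fig.~\ref{fig:seq-imp}: $\otimes$, $\ctimes$, $\cut$ each combine two premise-sequents into conclusion-sequents on the same side; $\Par$, $\cpar$, $\with$, $\plus_i$, $\bot$, $\top$, $d$, $c$ are unary on occurrences; $(s)$ just merges two sub-multisequents, so its two premises can themselves be split along the $\mathcal S/\mathcal T$ boundary by the induction hypothesis.) Hence the last rule of $\pi$ can be "projected" onto the $\mathcal S$-part, giving a proof of $\mathcal S$ from premises that are again of the form (sub-multisequent of $\mathcal S$)(sub-multisequent of $\mathcal T$) — to which the induction hypothesis applies — and symmetrically for $\mathcal T$. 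The base cases are $(axiom)$ and $(1)$, where the whole multisequent is a single sequent, so one of $\mathcal S$, $\mathcal T$ is empty and there is nothing to prove.

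The main obstacle I anticipate is making precise and verifying the claim that no rule straddles the boundary, especially for the rules whose conclusion contains two sequents built from the premises — $\otimes$, $\ctimes$ — and for the cut rule, where one must be sure that the free subscripts $i,j$ cannot range over sequents drawn from both $\mathcal S$ and $\mathcal T$ simultaneously; this is exactly where disjointness of occurrences is used, since a shared principal occurrence would be needed to link an $\mathcal S$-sequent to a $\mathcal T$-sequent. A secondary subtlety is the $(s)$ rule: its conclusion $\mathcal S_1 \mathcal S_2$ might split the $\mathcal S/\mathcal T$ boundary in a way not aligned with the split into $\mathcal S_1$ and $\mathcal S_2$, so one needs the induction hypothesis applied to each premise of $(s)$ to re-split them appropriately before recombining. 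Once these bookkeeping points are settled, the projected proofs have height no greater than $\pi$, so the induction goes through cleanly.
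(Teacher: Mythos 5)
Your proposal is correct and follows essentially the same route as the paper's (much terser) argument: the $(s)$ rule gives one direction, and the other rests on the observation that, by disjointness of occurrences, every rule's principal occurrences and the sequents it links lie entirely on the $\mathcal S$-side or entirely on the $\mathcal T$-side, so the proof can be projected onto each part. Your added induction-on-height scaffolding and the explicit handling of $(s)$, $\otimes$, $\ctimes$ and cut just make explicit what the paper leaves as a remark.
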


\begin{proof}
The structural rule of separation $(s)$ gives one direction. The other direction results from the following remark: principal sequents give linked sequents in the conclusion, except for the rule of separation $(s)$. Hence rules apply independently on $\mathcal S$ and $\mathcal T$.
\end{proof}



\begin{lemma}[Asynchrony]
The connectives $\Par, \with, \cpar$ are asynchronous: let R be an 
inference rule of one of these connectives (noted $\circ$ below), let 
$\mathcal S$ be a provable sequent of proof
$$
\scriptInfer[]{\dots \{A\circ B, \Gamma\} \dots}
    {\scriptInfer[]{\dots \: \vdots \: \dots}
         {\scriptInfer[\mbox{R on }A\circ B]{\dots \{A\circ B, \Gamma\} \dots}
              {\mathcal T}
          }
     }
$$
then there exists a proof of $\mathcal S$ with R as the last rule.
\end{lemma}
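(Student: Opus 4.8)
The plan is to proceed by induction on the height of the part of the proof lying above the displayed occurrence of $R$ on $A \circ B$ — that is, the number of rules applied between the rule $R$ (which introduces $A \circ B$ in a sequent with context $\Gamma$) and the final sequent $\dots \{A \circ B, \Gamma\} \dots$ of $\mathcal S$. If that part has height zero, $R$ is already the last rule and there is nothing to do. For the inductive step, let $R'$ be the last rule of the proof of $\mathcal S$; by induction hypothesis applied to each premise of $R'$ we may assume $R$ is the last rule in the subproof(s) above $R'$, so the proof ends with $R'$ applied to premises whose own last rule is $R$. It then suffices to show that $R'$ and $R$ commute, i.e. that one can permute $R$ down past $R'$ to obtain a proof of the same height ending in $R$, whose premises are proofs ending in $R'$ (or simpler). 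Since $A\circ B$ is not a constant and $R$ acts on the pair of subformulas $A$, $B$ (for $\Par$, $\cpar$) or on $A$ and $B$ in separate branches (for $\with$), the occurrence $A\circ B$ can only be a side formula of $R'$, never its principal formula, which is what makes the permutation possible.

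The key steps, in order: (1) set up the height induction as above and reduce to a single rule permutation $R' / R$; (2) do a case analysis on $R'$. The easy cases are the one-premise rules ($\bot$, $\Par$, $\cpar$, $\plus_i$, $\top$, and the structural rules $c$, $d$, $s$, and — if modalities are present — $.^s$, $.^u$): here $R$ and $R'$ act on disjoint data (distinct occurrences, or distinct places plus context manipulation that does not touch $A, B$), so they literally commute, and the displayed height is unchanged. (3) Handle the two-premise multiplicative rules $\otimes$ and $\ctimes$: the occurrence $A\circ B$ sits in exactly one premise branch (say the one carrying $\Gamma_i$ or $\Delta_j$), so $R$ is performed in that branch only and commutes with the binary rule, again preserving height. (4) Handle the binary additive rule $\with$: this is the delicate case, because $\with$ duplicates the ambient multisequent $\mathcal S$ (with superscripts $^1$, $^2$) across its two premises. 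If the occurrence $A\circ B$ lies in the context $\mathcal S$ that is copied, then $R$ (which introduces $A\circ B$) must be performed in both premises before the $\with$; permuting $R$ downward merges these two copies of $R$ into one, and we must check the superscript bookkeeping works out and that the maximum-of-heights convention keeps the height from growing. If instead $A\circ B$ lies in the principal sequent $\{\Gamma_i, A\with B\}$ side, the argument is symmetric and simpler. (5) Finally, since $R$ is asynchronous it never interacts with the $axiom$, $1$ rules (which have no premises and cannot have $A\circ B$ below them via these connectives) nor is it blocked by the $cut$ rule, which by the synchrony-of-cut lemma can itself be pushed out of the way if needed; in any case $cut$ treats $A\circ B$ as a side formula.

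The main obstacle I expect is step (4), the commutation with $\with$: one has to be careful that after pulling $R$ below $\with$ the two superscripted copies of the premise of $R$ are genuinely instances of the same rule application (so that a single $R$ suffices), that the context sharing/linking is preserved, and that the generalized height (maximum over partial proofs) does not increase — the statement explicitly demands a proof \emph{of the same height}. The structural rules $d$ (which adds a formula to a sequent, possibly the very context $\Gamma$ of $A\circ B$) and, in the modal extension, $.^s$ (which redistributes a context across places) also require a little attention to confirm that the occurrence $A\circ B$ and the subformulas $A, B$ are not among the formulas being moved, so that disjointness of the data really holds; but these are routine once the bookkeeping conventions of Figure~\ref{fig:seq-imp} are spelled out.
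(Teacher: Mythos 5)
Your overall strategy is the same as the paper's: induct on the height of the proof, reduce to a single permutation of the asynchronous rule $R$ past the rule $R'$ immediately below it, and do a case analysis on $R'$. That part is fine. But you have misplaced where the actual work lies, and one of your claims is false as stated. You assert that for the one-premise and structural rules ($c$, $d$, \dots) ``$R$ and $R'$ act on disjoint data \dots so they literally commute, and the displayed height is unchanged,'' and later you propose to ``confirm that the occurrence $A\circ B$ and the subformulas $A,B$ are not among the formulas being moved.'' You cannot confirm this, because it is not true: the rule $(d)$ may share the very occurrence $A\Par B$ (or $A\cpar B$) into a new sequent, and the cloning rule may duplicate a sequent containing it. In those situations the two rules do not literally commute. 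For instance, if $(d)$ spreads $A\Par B$ into $\{\Delta_j\}$, then after pushing $\Par$ below you must spread \emph{two} formulas $A$ and $B$, i.e.\ replace one application of $(d)$ by two; similarly, cloning or sharing into a sequent containing $A\cpar B$ corresponds, above the $\cpar$ rule, to \emph{two} sequents $\{\Gamma,A\}$ and $\{\Gamma,B\}$, so the structural rule must again be duplicated. These are precisely the cases (the $(w)$/cloning and $(d)$ cases, for both $\Par$ and $\cpar$) that the paper's proof writes out explicitly, and they are the non-trivial content of the lemma; they also put pressure on the ``same height'' clause, which survives only because the height of a multisequent proof is defined as the maximum over the partial proofs, not the total rule count. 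Conversely, the case you single out as delicate ($\with$) is dispatched by the paper as routine. So: right induction, right reduction to pairwise permutations, but the plan as written would stall on the structural-rule cases unless you replace the ``disjointness'' argument by the explicit rule-duplication construction.
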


\begin{proof}
It suffices to prove that a rule may be shifted upward if the before last rule
concerns an asynchronous connective. This is proved by induction on the 
height of the proof.

\begin{description}
\item[\fbox{case $\Par$}] \ \\
     \begin{description}
     \item[rule $(w)$: ] From ($A$ and $B$ range over $i$ and $i_0$ in the following proof):
          $$\scriptInfer[w]{\dots \{A \Par B, \Gamma_i\} 
                            \{A \Par B, \Gamma_{i_0}\} 
                            \{A \Par B, \Gamma_{i_0}\} \dots}
                  {\scriptInfer[\Par]{\dots \{A \Par B, \Gamma_i\} 
                                      \{A \Par B, \Gamma_{i_0}\} \dots}
                         {\dots \{A, B, \Gamma_i\} 
                                \{A, B, \Gamma_{i_0}\} \dots}
                   }
          $$
          then one can define the following proof:
          $$\scriptInfer[\Par]{\dots \{A \Par B, \Gamma_i\} 
                               \{A \Par B, \Gamma_{i_0}\} 
                               \{A \Par B, \Gamma_{i_0}\} \dots}
                  {\scriptInfer[w]{\dots \{A, B, \Gamma_i\} 
                                   \{A, B, \Gamma_{i_0}\} 
                                   \{A, B, \Gamma_{i_0}\} \dots}
                         {\dots \{A, B, \Gamma_i\} 
                                \{A, B, \Gamma_{i_0}\} \dots}
                   }
          $$
     \item[rule $(d)$: ] From:
          $$\scriptInfer[d]{\dots \{\Gamma_i, A\Par B\} 
                            \{\Delta_j, A\Par B\} \dots}
               {\scriptInfer[\Par]{\dots \{\Gamma_i, A\Par B\} 
                                   \{\Delta_j\} \dots}
                    {\dots \{\Gamma_i, A, B\} 
                           \{\Delta_j\} \dots}
               }
          $$
          One can build (twice rule d):
          $$\scriptInfer[\Par]{\dots \{\Gamma_i, A\Par B\} 
                               \{\Delta_j, A\Par B\} \dots}{
		\scriptInfer[d]{\dots \{\Gamma_i, A, B\}\{\Delta_j, A, B\} \dots}
                {	\scriptInfer[d]{\dots \{\Gamma_i, A, B\}\{\Delta_j, A\} \dots}
                    		{\dots \{\Gamma_i, A, B\}\{\Delta_j\} \dots}
		}
               }
          $$        
     \item[rule $(\ctimes)$: ] From ($A$ ranges over $i_2$ and $i_3$,
$X$ and $Y$ range over $i_1$ and $i_2$):
          $$\scriptInfer[\ctimes]{\dots \{\Gamma_{i_1}, X\Par Y\} 
                                  \{X\Par Y, \Gamma_{i_2}, A\ctimes B\} 
                                  \{A\ctimes B, \Gamma_{i_3}\}
                                  \{\Delta_j, A\ctimes B\} \dots}
                  {\scriptInfer[\Par]{\dots \{\Gamma_{i_1}, X\Par Y\} 
                                  \{X\Par Y, \Gamma_{i_2}, A\} 
                                  \{A, \Gamma_{i_3}\} \dots}
                         {\dots \{\Gamma_{i_1}, X, Y\} 
                                  \{X, Y, \Gamma_{i_2}, A\} 
                                  \{A, \Gamma_{i_3}\} \dots}
                   &
                   \scriptstyle \dots \{\Delta_j, B\} \dots}
          $$
          One can build:
          $$\scriptInfer[\Par]{\dots \{\Gamma_{i_1}, X\Par Y\} 
                                  \{X\Par Y, \Gamma_{i_2}, A\ctimes B\} 
                                  \{A\ctimes B, \Gamma_{i_3}\}
                                  \{\Delta_j, A\ctimes B\} \dots}
                  {\scriptInfer[\ctimes]{\dots \{\Gamma_{i_1}, X, Y\} 
                                  \{X, Y, \Gamma_{i_2}, A\ctimes B\} 
                                  \{A\ctimes B, \Gamma_{i_3}\}
                                  \{\Delta_j, A\ctimes B\} \dots}
                         {{\dots \{\Gamma_{i_1}, X, Y\} 
                                  \{X, Y, \Gamma_{i_2}, A\} 
                                  \{A, \Gamma_{i_3}\} \dots}
                           &
                           \scriptstyle \dots \{\Delta_j, B\}\dots}
                  }
          $$
     \item[rule $(\cpar)$: ] From ($X$ and $Y$ range over $i$ and $j$, $A$ and $B$ range 
     over $j$ and $k$):
          $$\scriptInfer[\cpar]{\dots \{\Delta_i, X\Par Y\} 
                                \{\Gamma_j, X\Par Y, A\cpar B\}
                                \{\Phi_k, A\cpar B\} \dots}
                  {\scriptInfer[\Par]{\dots \{\Delta_i, X\Par Y\} 
                                      \{\Gamma_j, X\Par Y, A\}
                                      \{\Gamma_j, X\Par Y, B\} 
                                      \{\Phi_k, A\} 
                                      \{\Phi_k, B\} \dots}
                         {\dots \{\Delta_i, X, Y\} 
                                \{\Gamma_j, X, Y, A\}
                                \{\Gamma_j, X, Y, B\} 
                                \{\Phi_k, A\} 
                                \{\Phi_k, B\} \dots}
                  }
          $$
          One can build:
          $$\scriptInfer[\Par]{\dots \{\Delta_i, X\Par Y\} 
                                \{\Gamma_j, X\Par Y, A\cpar B\}
                                \{\Phi_k, A\cpar B\} \dots}
                  {\scriptInfer[\cpar]{\dots \{\Delta_i, X, Y\} 
                                \{\Gamma_j, X, Y, A\cpar B\}
                                \{\Phi_k, A\cpar B\} \dots}
                         {\dots \{\Delta_i, X, Y\} 
                                \{\Gamma_j, X, Y, A\}
                                \{\Gamma_j, X, Y, B\} 
                                \{\Phi_k, A\} 
                                \{\Phi_k, B\} \dots}
                  }
          $$      
     \item[Other rules: ] Other cases are immediate.
     \end{description}
\item[\fbox{case $\cpar$}] \ \\
     \begin{description}
     \item[rule $(w)$: ] From:
          $$\scriptInfer[w]{\dots \{\Gamma_i, A\cpar B\}
                            \{\Gamma_{i_0}, A\cpar B\}
                            \{\Gamma_{i_0}, A\cpar B\} \dots}
                  {\scriptInfer[\cpar]{\dots \{\Gamma_i, A\cpar B\}
                                       \{\Gamma_{i_0}, A\cpar B\} \dots}
                         {\dots \{\Gamma_i, A\}
                                \{\Gamma_i, B\}
                                \{\Gamma_{i_0}, A\}
                                \{\Gamma_{i_0}, B\} \dots}
                  }
          $$
          One can build (twice rule w):
          $$\scriptInfer[\cpar]{\dots \{\Gamma_i, A\cpar B\}
                            \{\Gamma_{i_0}, A\cpar B\}
                            \{\Gamma_{i_0}, A\cpar B\} \dots}
                  {\scriptInfer[w]{\dots \{\Gamma_i, A\}
                                   \{\Gamma_i, B\}
                                   \{\Gamma_{i_0}, A\}
                                   \{\Gamma_{i_0}, B\}
                                   \{\Gamma_{i_0}, A\}
                                   \{\Gamma_{i_0}, B\} \dots}
                         {\scriptInfer[w]{\dots \{\Gamma_i, A\}
                                   \{\Gamma_i, B\}
                                   \{\Gamma_{i_0}, A\}
                                   \{\Gamma_{i_0}, A\}
                                   \{\Gamma_{i_0}, B\} \dots}
                         {\dots \{\Gamma_i, A\}
                                \{\Gamma_i, B\}
                                \{\Gamma_{i_0}, A\}
                                \{\Gamma_{i_0}, B\} \dots}
			 }
                 }
          $$
     \item[rule $(d)$: ] From:
          $$\scriptInfer[d]{\dots \{\Delta, X\}
                            \{\Gamma_{i_0}, A\cpar B, X\}
                            \{\Gamma_i, A\cpar B\} \dots}
                  {\scriptInfer[\cpar]{\dots \{\Delta, X\}
                                      \{\Gamma_{i_0}, A\cpar B\}
                                      \{\Gamma_i, A\cpar B\} \dots}
                        {\dots \{\Delta, X\}
                               \{\Gamma_{i_0}, A\}
                               \{\Gamma_{i_0}, B\}
                               \{\Gamma_i, A\}
                               \{\Gamma_i, B\} \dots}
                  }
          $$
          One can build (twice rule d):
          $$\scriptInfer[\cpar]{\dots \{\Delta, X\}
                                \{\Gamma_{i_0}, A\cpar B, X\}
                                \{\Gamma_i, A\cpar B\} \dots}
                  {\scriptInfer[d]{\dots \{\Delta, X\}
                                  \{\Gamma_{i_0}, A, X\}
                                  \{\Gamma_{i_0}, B, X\}
                                  \{\Gamma_i, A\}
                                  \{\Gamma_i, B\} \dots}
                        {\scriptInfer[d]{\dots \{\Delta, X\}
                                  \{\Gamma_{i_0}, A, X\}
                                  \{\Gamma_{i_0}, B\}
                                  \{\Gamma_i, A\}
                                  \{\Gamma_i, B\} \dots}
                        {\dots \{\Delta, X\}
                               \{\Gamma_{i_0}, A\}
                               \{\Gamma_{i_0}, B\}
                               \{\Gamma_i, A\}
                               \{\Gamma_i, B\} \dots}
			}
                  }
          $$
     \item[rule $(\Par)$: ] From:
          $$\scriptInfer[\Par]{\dots \{\Delta_i, X\Par Y\} 
                                \{\Gamma_j, X\Par Y, A\cpar B\}
                                \{\Phi_k, A\cpar B\} \dots}
                  {\scriptInfer[\cpar]{\dots \{\Delta_i, X, Y\} 
                                \{\Gamma_j, X, Y, A\cpar B\}
                                \{\Phi_k, A\cpar B\} \dots}
                         {\dots \{\Delta_i, X, Y\} 
                                \{\Gamma_j, X, Y, A\}
                                \{\Gamma_j, X, Y, B\} 
                                \{\Phi_k, A\} 
                                \{\Phi_k, B\} \dots}
                  }
          $$    
          One can build:
          $$\scriptInfer[\cpar]{\dots \{\Delta_i, X\Par Y\} 
                                \{\Gamma_j, X\Par Y, A\cpar B\}
                                \{\Phi_k, A\cpar B\} \dots}
                  {\scriptInfer[\Par]{\dots \{\Delta_i, X\Par Y\} 
                                      \{\Gamma_j, X\Par Y, A\}
                                      \{\Gamma_j, X\Par Y, B\} 
                                      \{\Phi_k, A\} 
                                      \{\Phi_k, B\} \dots}
                         {\dots \{\Delta_i, X, Y\} 
                                \{\Gamma_j, X, Y, A\}
                                \{\Gamma_j, X, Y, B\} 
                                \{\Phi_k, A\} 
                                \{\Phi_k, B\} \dots}
                  }
          $$
      \item[rule $(\otimes)$: ] From:
          $$\scriptInfer[\otimes]{\dots \{\Gamma_i, A\cpar B\}
                                 \{\Delta_j, A\cpar B, \Phi_k, X\otimes 
                                 Y\} \dots}
                  {\scriptInfer[\cpar]{\dots \{\Gamma_i, A\cpar B\}
                                       \{\Delta_j, A\cpar B, X\}
                                       \dots} 
                         {\dots \{\Gamma_i, A\}
                                \{\Gamma_i, B\}
                                \{\Delta_j, A, X\}
                                \{\Delta_j, B, X\}
                          \dots}
                  &
                  \scriptstyle \dots \{\Phi_k, Y\} \dots
                  }
         $$
         One can build:
         $$\scriptInfer[\cpar]{\dots \{\Gamma_i, A\cpar B\}
                                 \{\Delta_j, A\cpar B, \Phi_k, X\otimes 
                                 Y\} \dots}
                  {\scriptInfer[\otimes]{\dots \{\Gamma_i, A\}
                                        \{\Gamma_i, B\}
                                        \{\Delta_j, A, \Phi_k, X\otimes Y\}
                                        \{\Delta_j, B, \Phi_k, X\otimes Y\} \dots}
                         {{\dots \{\Gamma_i, A\}
                                \{\Gamma_i, B\}
                                \{\Delta_j, A, X\}
                                \{\Delta_j, B, X\} \dots}
                           &
                           \scriptstyle \dots \{\Phi_k, Y\} \dots}
                  }
         $$
      \item[rule $(\cpar)$: ] From:
         $$\scriptInfer[\cpar]{\dots \{\Gamma_i, X\cpar Y\}
                               \{\Delta_j, X\cpar Y, A \cpar B\}
                               \{\Phi_k, A\cpar B\} \dots}
                 {\scriptInfer[\cpar]{\dots \{\Gamma_i, X\}
                                      \{\Gamma_i, Y\}
                                      \{\Delta_j, X, A \cpar B\}
                                      \{\Delta_j, Y, A \cpar B\}
                                      \{\Phi_k, A\cpar B\} \dots}
                        {\dots \{\Gamma_i, X\}
                               \{\Gamma_i, Y\}
                               \{\Delta_j, X, A\}
                               \{\Delta_j, X, B\}
                               \{\Delta_j, Y, A\}
                               \{\Delta_j, Y, B\}
                               \{\Phi_k, A\}
                               \{\Phi_k, B\} \dots}
                 }
         $$
         One can build:
         $$\scriptInfer[\cpar]{\dots \{\Gamma_i, X\cpar Y\}
                               \{\Delta_j, X\cpar Y, A \cpar B\}
                               \{\Phi_k, A\cpar B\} \dots}
                 {\scriptInfer[\cpar]{\dots \{\Gamma_i, X\cpar Y\}
                                      \{\Delta_j, X \cpar Y, A\}
                                      \{\Delta_j, X \cpar Y, B\}
                                      \{\Phi_k, A\}
                                      \{\Phi_k, B\} \dots}
                        {\dots \{\Gamma_i, X\}
                               \{\Gamma_i, Y\}
                               \{\Delta_j, X, A\}
                               \{\Delta_j, X, B\}
                               \{\Delta_j, Y, A\}
                               \{\Delta_j, Y, B\}
                               \{\Phi_k, A\}
                               \{\Phi_k, B\} \dots}
                 }
         $$
      \item[rule $(\ctimes)$: ] From:
         $$\scriptInfer[\ctimes]{\dots \{\Gamma_{i_1}, X\ctimes Y\}
                                 \{\Gamma_{i_2}, X\ctimes Y, A \cpar B\}
                                 \{\Delta_{j}, X\ctimes Y\}
                                 \{\Phi_k, A\cpar B\} \dots}
                 {\scriptInfer[\cpar]{\dots \{\Gamma_{i_1}, X\}  
                                      \{\Gamma_{i_2}, X, A \cpar B\}
                                      \{\Phi_k, A\cpar B\} \dots}
                        {\dots \{\Gamma_{i_1}, X\}
                               \{\Gamma_{i_2}, X, A\}
                               \{\Gamma_{i_2}, X, B\}
                               \{\Phi_k, A\}
                               \{\Phi_k, B\} \dots}
                 &
                 \scriptstyle \dots \{\Delta_{j}, Y\} \dots
                 }
         $$
         One can build:
         $$\scriptInfer[\cpar]{\scriptstyle \dots \{\Gamma_{i_1}, X\ctimes Y\}
                                 \{\Delta_{j}, X\ctimes Y\}
                                 \{\Gamma_{i_2}, X\ctimes Y, A \cpar B\}
                                 \{\Phi_k, A\cpar B\} \dots}
                 {\scriptInfer[\ctimes]{\dots \{\Gamma_{i_1}, X \ctimes Y\}
                               \{\Delta_{j}, X\ctimes Y\}
                               \{\Gamma_{i_2}, X\ctimes Y, A\}
                               \{\Gamma_{i_2}, X\ctimes Y, B\}
                               \{\Phi_k, A\}
                               \{\Phi_k, B\} \dots}
                        {\dots \{\Gamma_{i_1}, X\}
                               \{\Gamma_{i_2}, X, A\}
                               \{\Gamma_{i_2}, X, B\}
                               \{\Phi_k, A\}
                               \{\Phi_k, B\} \dots
                        &
                        \scriptstyle \dots \{\Delta_{j}, Y\} \dots
                        }
                 }
         $$
      \end{description}
\item[\fbox{case other}] Other cases are treated as usual.
\end{description}

\end{proof}


\begin{lemma}[Synchrony of the cut rule]
The cut rule is synchronous, i.e. let a proof of $\mathcal S$ be of the following form (R is a rule):
$$\scriptInfer[cut]{\mathcal S}
	{
	\scriptInfer[R]{{\mathcal W}_1[A]}
		{{\mathcal U}[A]}
	&
	\scriptstyle {\mathcal V}[A^\perp]
	}
$$
one can build a proof of $\mathcal S$ of the form:
$$\scriptInfer[R]{\mathcal S}
	{
	\scriptInfer[cut]{{\mathcal W}_2}
		{
		{\mathcal U}[A]
		&
		\scriptstyle {\mathcal V}[A^\perp]
		}
	}
$$
Moreover, the height of the partial proof ending with the cut rule in the second case is less than in the first case.
\end{lemma}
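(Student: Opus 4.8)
The statement is the ``commutative cut'' permutation: when the left premise of a cut is derived by a rule $R$ that leaves the cut formula $A$ untouched, $R$ can be slid underneath the cut. The plan is to argue by induction on the height of the subproof whose last rule is $R$, distinguishing cases on $R$ (the rules $(axiom)$ and $(1)$ have no premise, so the displayed configuration does not arise for them; those are the ``key'' cut cases, treated elsewhere in the cut-elimination proof). The guiding observation is that, since $A$ still occurs in the premise $\mathcal{U}[A]$ of $R$, the occurrence $A$ is neither part of a principal formula of $R$ nor the occurrence used in a decisive way by a structural rule; hence, in the essential cases, $R$ acts on material disjoint from $A$ and the two operations commute.

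The construction is uniform: perform the cut against $\mathcal{V}[A^\perp]$ on the premise(s) of $R$ containing $A$, and then reapply $R$. For the unary logical rules ($\bot,\Par,\cpar,\plus_1,\plus_2$), for the ``passive'' instances of the structural rules, and for the binary rules $\otimes$, $\ctimes$ and $(s)$ — in which $A$ lies in exactly one premise, the other being untouched — a single new instance of $(cut)$ suffices. Only $(\with)$, which duplicates the context carrying $A$ into both of its premises as superscripted copies $A^1,A^2$, forces one to cut a copy of the subproof of $\mathcal{V}[A^\perp]$ into each premise before reapplying $(\with)$. In every case one checks that the multisequent obtained is exactly $\mathcal{S}$: the way $R$ reorganises the sequent structure — adding an occurrence to a further sequent, cloning a sequent, splitting a multisequent, duplicating a context — is independent of, hence commutes with, the pairwise merging of the $A$-sequents with the $A^\perp$-sequents effected by $(cut)$.

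For the height, observe that in the rewritten proof every new cut stands above a subproof of $\mathcal{U}[A]$ (or of a premise of $R$), which is strictly shorter than the subproof of $\mathcal{W}_1[A]$ that sat above the original cut; with the generalised notion of height (the maximum over partial proofs) this yields the stated strict decrease of the partial proof ending with a cut, while the part below it, now headed by $R$, no longer contains that cut.

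The step I expect to be the real obstacle is the behaviour of the structural rules $(c)$ and $(d)$ (and, to a lesser extent, $(s)$) when the very sequent they act on contains the cut occurrence $A$ — e.g.\ $(d)$ with $A$ as its shared occurrence, or $(c)$ cloning a sequent through which $A$ passes. There the cut multiplies that sequent, so rebuilding the structure after the cut may take several applications of $(c)$/$(d)$, and $R$ is recovered only up to a bounded amount of structural reorganisation; one has to verify that this is harmless for the way the lemma is invoked in the cut-elimination argument, or that such configurations do not arise there. A secondary, bookkeeping, difficulty is to make ``the height of the partial proof ending with the cut rule'' precise in the presence of $(s)$, which fragments a proof into several partial proofs. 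The remaining cases are routine and mirror those displayed for the asynchrony lemma.
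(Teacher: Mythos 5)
Your proposal takes essentially the same route as the paper, which dispatches this lemma in a single line (``by proving permutation properties as for proving synchrony of $\otimes$'') and leaves to the reader exactly the case analysis you spell out: cut into the premise(s) of $R$ that carry $A$, duplicate the cut for $(\with)$, reapply $R$, and observe that each new cut now closes a partial proof sitting above a proper subproof of ${\mathcal W}_1[A]$, which gives the height decrease. The difficulty you flag for $(c)$ and $(d)$ acting on the cut occurrence itself (or on a sequent that the cut multiplies) is genuine and is passed over in silence by the paper; it is harmless in context because the cut-elimination argument only needs cuts pushed up until the last rules on both premises introduce the cut formulas, so recovering $R$ only up to a bounded number of additional structural steps does not break the induction, though it means the lemma's conclusion should be read modulo such steps rather than literally.
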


\begin{proof}
By proving permutation properties as for proving synchrony of $\otimes$.
\end{proof}


\begin{proposition}
The system enjoys cut-elimination: if $\mathcal S$ is a provable multi-sequent, 
then there exists at least one cut-free proof of $\mathcal S$.
\end{proposition}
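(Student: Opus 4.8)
The plan is to prove cut-elimination by the usual double-induction argument, adapted to the multisequent setting. I would measure a topmost cut by the pair $(\text{cut-rank}, \text{height})$, where the cut-rank is the size of the cut formula $A$ and the height is the sum of the heights of the two immediate subproofs feeding the cut. As usual, one shows that any proof whose last rule is a cut but whose subproofs are cut-free can be rewritten into a cut-free proof; iterating from the topmost cuts upward then eliminates all cuts. The heart of the argument is the case analysis on the last rules $R_1$, $R_2$ of the two premises of the cut.

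\textbf{Key steps.} First I would dispatch the \emph{commutative} cases: if the cut formula $A$ is not principal in (say) the left premise's last rule $R_1$, then by Lemma (Synchrony of the cut rule) the cut permutes above $R_1$, strictly decreasing the height component while leaving the cut-rank fixed, so the induction hypothesis applies. Here one must be slightly careful in the multisequent framework: the cut formula may occur in several linked sequents (the free subscript $i$ in $\{\Gamma_i, A\}$), and permuting the cut upward has to be done uniformly over all those occurrences — but this is exactly what the synchrony lemma, proved by the same permutation bookkeeping as for $\otimes$, delivers. The structural rules $(c)$, $(d)$, $(s)$ on either side are handled the same way, noting that $(s)$ together with the Separability lemma lets one split off the half of the multisequent not touched by the cut. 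Second, I would handle the \emph{axiom} case: a cut against an axiom $A, A^\perp$ just erases, as usual. Third come the \emph{principal/principal} (logical) cases, where $A$ is principal on both sides: for each pair of dual connectives $(\otimes,\Par)$, $(\ctimes,\cpar)$, $(\plus,\with)$, and the constants $(1,\bot)$, one rewrites the cut on $A\star B$ into one or two cuts on the strictly smaller subformulas, lowering the cut-rank. The multiplicative pair $(\otimes,\Par)$ is standard. The genuinely new pair is $(\ctimes,\cpar)$: a cut between the conclusion of a $(\ctimes)$ rule (which produces \emph{two} linked sequents $\{\Gamma_i, A\ctimes B\}\{\Delta_j, A\ctimes B\}$ from premises $\{\Gamma_i,A\}$ and $\{\Delta_j,B\}$) and the conclusion of a $(\cpar)$ rule (which produces $\{\Gamma'_k, A^\perp\cpar B^\perp\}$ from $\{\Gamma'_k,A^\perp\}\{\Gamma'_k,B^\perp\}$) must be replaced by a cut on $A$ together with a cut on $B$, reassembling the shared contexts; one checks that the resulting multisequent structure matches $\mathcal S$ using the $(d)$ and $(s)$ rules to duplicate/merge the $\Gamma'_k$ occurrences. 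Finally, the $\top$ case is immediate (a cut whose premise is introduced by the $(\top)$ rule is absorbed), and there is no rule for $0$ so no case arises.

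\textbf{Main obstacle.} I expect the principal case for $\ctimes$ vs.\ $\cpar$ to be the delicate point, precisely because these connectives do \emph{not} split or duplicate contexts the way $\otimes/\Par$ do: the $(\ctimes)$ rule creates new linked sequents while keeping the contexts separate, and the $(\cpar)$ rule shares the whole context between its two premises. Getting the reduced proof to land on \emph{exactly} the multisequent $\mathcal S$ — with the right sharing pattern between sequents — will require threading the cloning rule $(c)$, the rule $(d)$, and Separability together carefully, and checking that the height genuinely decreases in the sense of the generalized height (maximum over partial proofs). A secondary subtlety is that, because occurrences are shared across sequents, when the cut formula is non-principal one has to make sure the permutation supplied by the synchrony lemma is applied to \emph{all} copies coherently; once the three lemmas (Separability, Asynchrony, Synchrony of cut) are in hand this is bookkeeping, but it is the place where an incautious argument would go wrong. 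The asynchrony lemma is what guarantees, dually, that on the $\Par/\with/\cpar$ side we may always assume the relevant rule is last, which is what makes the principal-case rewriting available in the first place.
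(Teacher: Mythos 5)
Your proposal matches the paper's proof in all essentials: both reduce to the principal/principal cases by invoking the synchrony of the cut rule and the asynchrony of $\Par$, $\cpar$, $\with$, and then rewrite each dual-pair cut ($\otimes$/$\Par$, $\ctimes$/$\cpar$, etc.) into cuts on the immediate subformulas, with the $\ctimes$/$\cpar$ reduction carried out by two nested cuts exactly as you describe. Your lexicographic measure on (cut-rank, height) is if anything a more careful formulation of the induction than the paper's bare "induction on the height," but the argument is the same.
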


\begin{proof} By induction on the height of the proof. The synchrony of cut 
allows us to check only the last rule applied on each branch. Furthermore,
as the connectives $\Par$, $\cpar$ and $\with$ are asynchronous, we are
allowed to consider that only dual rules are applied as last rules.
\begin{description}
\item[\fbox{case Axiom}] Obvious.
\item[\fbox{case $1$/$\bot$} ] This case is obvious as the proof looks like:
$$\scriptInfer[cut]{\dots \{\Gamma_i\} \dots}
        { \scriptInfer[\bot]{\dots \{\bot, \Gamma_i\} \dots}
                {\dots \{\Gamma_i\} \dots}
        &
          \scriptInfer[1]{\{1\}}{}
        }
$$
\item[\fbox{case $\otimes$/$\Par$} ] From:
$$\scriptInfer[cut]{\dots \{\Phi_k, \Gamma_i, \Delta_j\} \dots}
        { \scriptInfer[\Par]{\dots \{A\Par B, \Phi_k\} \dots}
                {\dots \{A, B, \Phi_k\} \dots}
        &
          \scriptInfer[\otimes]{\dots \{A^\perp \otimes B^\perp, \Gamma_i, \Delta_j\} \dots}
                {\dots \{A^\perp, \Gamma_i\} \dots
                 &
                 \scriptstyle \dots \{B^\perp, \Delta_j\} \dots}
        }
$$
One can build:
$$\scriptInfer[cut]{\dots \{\Phi_k, \Gamma_i, \Delta_j\} \dots}
        { \scriptInfer[cut]{\dots \{B, \Phi_k, \Gamma_i\} \dots}
                {\dots \{A, B, \Phi_k\} \dots
                &
                \scriptstyle \dots \{A^\perp, \Gamma_i\} \dots}
        &
          \scriptstyle \dots \{B^\perp, \Delta_j\} \dots
        }
$$
\item[\fbox{case $\ctimes$/$\cpar$} ] From:
$$\scriptInfer[cut]{\dots \{\Phi_k, \Gamma_i\} \{\Phi_k, \Delta_j\} \dots}
        { \scriptInfer[\cpar]{\dots \{A\cpar B, \Phi_k\} \dots}
                {\dots \{A, \Phi_k\} \{B, \Phi_k\} \dots}
        &
          \scriptInfer[\ctimes]{\dots \{A^\perp \ctimes B^\perp, \Gamma_i\}
                          \{A^\perp \ctimes B^\perp, \Delta_j\} 
                          \dots}
                {\dots \{A^\perp, \Gamma_i\} \dots
                 &
                 \scriptstyle \dots \{B^\perp, \Delta_j\} \dots}
        }
$$
One can build:
$$\scriptInfer[cut]{\dots \{\Phi_k, \Gamma_i\} \{\Phi_k, \Delta_j\} \dots}
        {\scriptInfer[cut]{\dots \{\Gamma_i, \Phi_k\} \{B, \Phi_k\} \dots}
               {\dots \{A^\perp, \Gamma_i\} \dots
               &
                \scriptstyle \dots \{A, \Phi_k\} \{B, \Phi_k\} \dots}
        &
         \scriptstyle \dots \{B^\perp, \Delta_j\} \dots
        }
$$
\item[\fbox{case $.^s/.^u$}] From:
     $$\scriptInfer[{\rm cut}]{\dots \{\Gamma_i, \Delta_j^s\}
				\{\Gamma_i, \Delta_{j'}^s, \Phi_k^s\} \dots}
		{ \scriptInfer[.^s]{\dots \{\Gamma_i, \Delta_j^s\}
				\{\Gamma_i, \Delta_{j'}^s, A^s\} \dots}
			{\dots \{\Gamma_i^j, \Delta_j\}
				\{\Gamma_i^{j'}, \Delta_{j'}, A\} \dots}
		&
		  \scriptInfer[.^u]{\dots \{A^{\perp u}, \Phi_k^s\} \dots}
			{\dots \{A, \Phi_k\} \dots}
		}
     $$
     One can infer:
     $$\scriptInfer[.^s]{\dots \{\Gamma_i, \Delta_j^s\}
				\{\Gamma_i, \Delta_{j'}^s, \Phi_k^s\} \dots}
		{ \scriptInfer[{\rm cut}]{\dots \{\Gamma_i^j, \Delta_j\}
				\{\Gamma_i^{j'}, \Delta_{j'}, \Phi_k\} \dots}
			{\dots \{\Gamma_i^j, \Delta_j\}
				\{\Gamma_i^{j'}, \Delta_{j'}, A\} \dots
			&
		  	  \scriptstyle \dots \{A, \Phi_k\} \dots
		}	}
     $$
\end{description}
\end{proof}

%
%

\end{document}